\theoremstyle{plain}
    \newtheorem{thm}{Theorem}[section]
    \newtheorem{theorem}[thm]{Theorem}
    \newtheorem{lemma}[thm]{Lemma}
    \newtheorem{proposition}[thm]{Proposition}
    \newtheorem{question}[thm]{Question}
\theoremstyle{definition}
    \newtheorem{definition}[thm]{Definition}
\DeclareMathOperator{\CSP}{CSP}
\DeclareMathOperator{\PCSP}{PCSP}
\DeclareMathOperator{\OneInThree}{1-in-3}
\DeclareMathOperator{\NAE}{NAE-3}
\DeclareMathOperator{\OneInThreeSAT}{1-in-3-SAT}
\DeclareMathOperator{\NAESAT}{NAE-3-SAT}
\DeclareMathOperator{\arity}{arity}
\newcommand{\relstr}[1]{\mathbb{#1}}
\newcommand{\ttt}[1]{\langle #1 \rangle}
\newcommand{\rrr}[1]{[#1]}
\newcommand{\vc}[1]{\mathbf{#1}}
\newcommand{\abs}[1]{|#1|}
\begin{document}

\title{Promises Make Finite (Constraint Satisfaction) Problems Infinitary}

\author{Libor Barto
\thanks{
Libor Barto has received funding from the European Research Council
(ERC) under the European Unions Horizon 2020 research and
innovation programme (grant agreement No 771005).
} \\
Department of Algebra\\Faculty of Mathematics and Physics\\
Charles University\\
Prague, Czechia\\
Email: libor.barto@gmail.com}

\date{April 1, 2019}

\maketitle

\begin{abstract}
The fixed template Promise Constraint Satisfaction Problem (PCSP) is a recently proposed significant generalization of the fixed template CSP, which includes approximation variants of satisfiability and graph coloring problems. All the currently known tractable (i.e., solvable in polynomial time) PCSPs over finite templates can be reduced, in a certain natural way, to tractable CSPs. However, such CSPs are often over infinite domains. We show that the infinity is in fact necessary by proving that a specific finite-domain PCSP, namely (1-in-3-SAT, Not-All-Equal-3-SAT), cannot be naturally reduced to a tractable finite-domain CSP, unless P=NP. 
\end{abstract}


\section{Introduction}


Finding a 3-coloring of a graph or finding a satisfying assignment of a propositional 3-CNF formula  (or rather the decision variants of these problems) are well-known and fundamental NP-complete computational problems. The latter problem, 3-SAT, has many restrictions still known to be NP-complete~\cite{Sch78}, two of which play a central role in this paper. The \emph{positive 1-in-3-SAT}, denoted $\OneInThreeSAT$, can be defined as follows. The instance is a list of triples of variables and the problem is to find a mapping from the set of variables to $\{0,1\}$ such that exactly one variable in each triple is assigned 1. In the \emph{positive Not-All-Equal-3-SAT}, denoted $\NAESAT$, instances are triples of variables as well, but the mapping is only required to assign  not-all-equal elements to each triple.  

There are two ways how to relax the requirement on the assignment in order to get a potentially simpler problem.
The first one is to require a specified fraction of the constraints to be satisfied. For example, given a satisfiable 3-SAT instance, is it easier to find an assignment satisfying at least 90\% of clauses? A celebrated result of H{\aa}stad~\cite{H01} proves that the answer is ``No.'' -- it is still an NP-complete problem. (Actually, any fraction greater than $7/8$ gives rise to an NP-complete problem while the fraction $7/8$ is achievable in polynomial time.) 

The second type of relaxation is to require that a specified weaker version of every constraint is satisfied. For example, we want to find a 100-coloring of a 3-colorable graph, or we want to find a valid $\NAESAT$ assignment to a $\OneInThree$-satisfiable instance.
The complexity of the former problem is a notorious open question (for a recent development see~\cite{BG16,BKO18}, but even 6-coloring a 3-colorable graph is not known to be NP-complete).
On the other hand, the latter problem admits an elegant polynomial time algorithm~\cite{BG18a,BG18b}, which we now describe.

We take a satisfiable instance of $\OneInThreeSAT$ and 
replace each triple of variables $(x,y,z)$ in the instance by the linear equation $x+y+z=1$ over $\mathbb{Z}$ (the integers). 
The obtained system is solvable (by the original 0,1 assignment) and it is known that finding a solution to a system of linear equations over $\mathbb{Z}$ is in P (see~\cite{GLS93}). Now, if $\phi$ is any solution to the system, then 
$$
\psi(x)= \left\{
  \begin{array}{ll}
	0 & \mbox{ if } \phi(x) \leq 0 \\
	1 & \mbox{ if } \phi(x) > 0
	\end{array}
	\right.
$$
is a valid $\NAESAT$ assignment.

Alternatively, one can solve the system over $\mathbb{Q} \setminus \{1/3\}$ by a simple adjustment of Gaussian elimination and define  $\psi(x)=0$ iff $\phi(x) < 1/3$. (A more general class of problems can be solved, e.g., by restricting the domain $\mathbb{Q} \setminus \{c\}$ to the interval $[0,1]$ and using an adjustment of linear programming rather than Gaussian elimination; see~\cite{BG18a,BG18b}.)

It is remarkable that both polynomial algorithms transfer the original problem over a finite domain to a problem over an infinite domain. The main result of this paper shows that this finite-to-infinite transition is unavoidable. 
This result is stated more precisely below, as Theorem~\ref{thm:main_fake}, but let us first describe its background. 

\subsection{Constraint Satisfaction Problems}

It will be convenient in this paper to use a formalization of CSP and PCSP  via homomorphisms of relational structures. We refer to~\cite{BKW17,BG18b} for translations to the other standard definitions. 

Let $\relstr{A}$ be a relational structure of a finite signature, often called \emph{template} in this context.
The \emph{Constraint Satisfaction Problem (CSP) over} $\relstr{A}$, denoted $\CSP(\relstr{A})$, is the problem of deciding whether a given finite relational structure $\relstr{X}$ (similar to $\relstr{A}$) has a homomorphism to $\relstr{A}$. The \emph{search problem} for $\CSP(\relstr{A})$ is to find such a homomorphism. Examples of CSPs include the 3-coloring problem (where $\relstr{A}$ is a structure with a three-element domain and the binary disequality relation), 3-SAT (where $\relstr{A}$ consists of 8 ternary relations of the form $(\neg) x \vee (\neg) y \vee (\neg) z$ on the domain $\{0,1\}$), the problems $\OneInThreeSAT$, $\NAESAT$ for which the templates are structures with domain $\{0,1\}$ and a single ternary relation
\begin{align*}
\OneInThree &= (\{0,1\};  \{(1,0,0),(0,1,0),(0,0,1)\}) \\
\NAE &= (\{0,1\}; \{0,1\}^3 \setminus \{(0,0,0),(1,1,1)\})\enspace,
\end{align*}
and the infinite-domain CSPs over $(\mathbb{Z}; x+y+z=1)$ and $(\mathbb{Q}\setminus\{1/3\};x+y+z=1)$ that were used to solve the relaxed version of $\OneInThreeSAT$. 

The complexity of the CSP over finite templates (modulo polynomial time reductions) is fully classified by a recent deep result of Bulatov~\cite{Bul17} and, independently, Zhuk~\cite{Zhuk17}. The classification is a culmination of an active research program, so called algebraic approach to CSPs, inspired by the landmark paper of Feder and Vardi~\cite{FV98}, where the authors conjectured that each finite template CSP is either tractable or NP-complete, and observed that the tractability is often tied to closure properties 
of relations in the template. General theory of CSPs, whose basics were developed in~\cite{JCG97,J98,BJK05,BOP18}, confirmed this observation by closely linking CSPs to Universal Algebra (the theory of general algebraic systems) and provided guidance and tools for the eventual resolution of the dichotomy conjecture in~\cite{Bul17,Zhuk17}. 

The theory of CSPs is based on a connection between constructions on relational structures (that lead to polynomial time reductions) and properties of their \emph{polymorphisms} -- multivariate functions preserving the structures. So far, the most general relational construction introduced in~\cite{BOP18} is the so called pp-construction. Roughly, we say that $\relstr{A}$ \emph{pp-constructs} $\relstr{B}$ if $\relstr{B}$ can be obtained from $\relstr{A}$ by a sequence of first-order interpretations restricted to primitive positive formulae and replacements by homomorphically equivalent structures (see Section~\ref{sec:prelim} for a more detailed definition). 
In this situation, there is a natural gadget reduction of $\CSP(\relstr{B})$ to $\CSP(\relstr{A})$.
It follows, for instance, that $\CSP(\relstr{A})$ is NP-complete whenever $\relstr{A}$ pp-constructs a template of 3-SAT.
It turned out~\cite{Bul17,Zhuk17}, confirming the \emph{algebraic dichotomy conjecture} from~\cite{BJK05}, that this is exactly the borderline between tractable and NP-complete CSPs: all templates that do not pp-construct the template of 3-SAT have tractable CSPs. 
The algebraic part of the theory will not be discussed here, let us just mention that the strongest available algebraic 
characterization of the borderline by means of cyclic operations~\cite{BK12} is essential for the proof of the main result, Theorem~\ref{thm:main_fake}.

\subsection{Promise CSPs}

A \emph{template} for the \emph{Promise CSP} (PCSP) is a pair $(\relstr{A},\relstr{B})$ of similar relational structures of finite  signature such that $\relstr{A}$ has a homomorphism to $\relstr{B}$. The PCSP over such a pair, denoted $\PCSP(\relstr{A},\relstr{B})$, is the following promise problem: given a relational structure $\relstr{X}$ (similar to $\relstr{A}$ and $\relstr{B}$) output ``Yes.'' if $\relstr{X}$ has a homomorphisms to $\relstr{A}$, and ``No.'' if $\relstr{X}$ does not have a homomorphism to $\relstr{B}$. 
The \emph{search problem} for $\PCSP(\relstr{A},\relstr{B})$ is to find a homomorphism $\relstr{X} \to \relstr{B}$ given an input structure $\relstr{X}$ that has a homomorphism to $\relstr{A}$. Notice that $\CSP(\relstr{A})$ is the same as $\PCSP(\relstr{A},\relstr{A})$. Examples of PCSPs include the 100-coloring of a 3-colorable graph, where the two structures are $(\{1,2,3\}, \neq)$ and $(\{1, 2, \dots, 100\}, \neq)$, and $\PCSP(\OneInThree,\NAE)$ -- the central computational problem in this paper.
Similar examples were the motivation for introducing the PCSP framework in~\cite{AGH17,BG16,BG18a,BG18b}.

The current knowledge of the complexity of finite-domain PCSPs beyond CSPs is very much limited.
For example, the Feder-Vardi dichotomy conjecture for CSPs was inspired by two earlier classification results: for CSPs over a Boolean (i.e., two-element) domain~\cite{Sch78} and for CSPs over graphs~\cite{HN90}. In PCSPs, even the analogues of these early results are challenging. PCSPs over graphs include, as a very special case, the PCSP over a pair of complete graphs, the problem of $l$-coloring a $k$-colorable graph. 
A systematic study of Boolean PCSPs (where both structures have a two-element domain) was initiated in~\cite{BG18a}, 
but the general Boolean case is still wide open.

Fortunately, building on the initial insights and results in~\cite{AGH17,BG16,BG18a,BG18b}, it was observed in~\cite{BKO18} (among many other important results such as the NP-hardness of 5-coloring a 3-colorable graph) that the basics of the CSP theory from~\cite{BOP18} generalize to PCSPs. In particular, the notions of pp-constructions and polymorphisms have their PCSP counterparts and the connection between relational and algebraic structures works just as well as in the CSP. 
This is especially interesting because some hardness and algorithmic results in PCSP require techniques used in approximation. PCSP thus might help building a bridge between the discrete, universal algebraic world of (exact) CSPs and analytical world of approximation. 

\subsection{Finite PCSPs are infinitary}

The main result of this paper says that it is impossible to reduce $\PCSP(\OneInThree, \NAE)$ to a tractable finite-domain CSP by means of a pp-construction, unless P=NP.

\begin{theorem} \label{thm:main_fake}
Let $\relstr{C}$ be a finite relational structure that pp-constructs $(\OneInThree, \NAE)$.
Then $\CSP(\relstr{C})$ is NP-complete.
\end{theorem}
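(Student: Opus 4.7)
The plan is to argue by contradiction via the polymorphism-minion translation of pp-constructions. Suppose $\CSP(\relstr{C})$ is tractable. By the PCSP analogue of the pp-construction/polymorphism correspondence (the CSP version of which appears in~\cite{BOP18}, extended to PCSP in~\cite{BKO18}), the hypothesis that $\relstr{C}$ pp-constructs $(\OneInThree,\NAE)$ yields a minion homomorphism $\xi\colon \mathrm{Pol}(\relstr{C}) \to \mathrm{Pol}(\OneInThree,\NAE)$, i.e., an arity-preserving map that commutes with identification and permutation of variables. Combining Bulatov--Zhuk with the cyclic-operations theorem of~\cite{BK12}, tractability of $\CSP(\relstr{C})$ supplies a cyclic polymorphism $c_p \in \mathrm{Pol}(\relstr{C})$ of every prime arity $p > |C|$. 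Since the cyclic identity $c(x_1, \dots, x_p) = c(x_2, \dots, x_p, x_1)$ is a minor identity, each $\xi(c_p)$ is a cyclic polymorphism of $(\OneInThree, \NAE)$ of arity $p$.

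I would next classify the cyclic polymorphisms of $(\OneInThree, \NAE)$. Encoding $f \colon \{0,1\}^n \to \{0,1\}$ by $g \colon 2^{[n]} \to \{0,1\}$ with $g(S) = f(\chi_S)$, the polymorphism condition reads: for every ordered $3$-partition $(S_1, S_2, S_3)$ of $[n]$, the triple $(g(S_1), g(S_2), g(S_3))$ is not constant. For $p$ prime, every non-trivial orbit of the cyclic shift on $2^{[p]}$ has size $p$, so the invariance of $g$ on orbits drastically restricts the degrees of freedom. A short case analysis on $(3,1,1)$, $(3,2,2)$, and $(4,3,0)$-type partitions then shows that for every prime $p > 3$ the only cyclic polymorphisms of $(\OneInThree,\NAE)$ are the threshold $g(S) = 1 \iff |S| \geq \lfloor p/3 \rfloor + 1$ and its $0/1$-swap; for $p = 3$ there is no cyclic polymorphism at all. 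Thus each $\xi(c_p)$ is forced, up to negation, to be this symmetric threshold $t_p$ with threshold approximately $p/3$.

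To turn this rigidity into a contradiction I would track low-arity minors of $c_p$ and their images under $\xi$. The binary minor $c_p(x, y, \dots, y)$ must map to $t_p(x, y, \dots, y)$, which equals the second projection for every prime $p \geq 5$; more generally, the ternary minor $c_p(x^{(a)}, y^{(b)}, z^{(c)})$ with $a + b + c = p$ maps to the boolean OR taken over exactly the subset of $\{x, y, z\}$ whose coefficient among $a, b, c$ meets the threshold $\lfloor p/3 \rfloor + 1$. Since $\mathrm{Pol}(\relstr{C})$ is a finite clone, and therefore has only finitely many essentially $k$-ary polymorphisms for each fixed $k$, a pigeonhole argument over the infinite family of primes $p > |C|$ should allow one to collapse these minor identifications and produce a clone homomorphism from $\mathrm{Pol}(\relstr{C})$ onto the projections on $\{0,1\}$, which by the algebraic dichotomy forces $\CSP(\relstr{C})$ to be NP-complete---contradicting the tractability assumption.

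The main technical obstacle lies in the last step: a minion homomorphism preserves minors but not composition, so the threshold rigidity on the image side must be exploited without appealing to composition in $\mathrm{Pol}(\OneInThree, \NAE)$. A plausible way to close the argument is to use the essential uniqueness of cyclic polymorphisms of prime arity in $\mathrm{Pol}(\OneInThree, \NAE)$ to pin down $\xi$ on every cyclic operation, and then to apply the absorption / Prague-strategy technology of the algebraic approach to lift the minor-level obstruction to a direct pp-construction of a $3$-SAT template inside $\relstr{C}$.
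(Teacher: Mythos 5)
Your opening moves coincide with the paper's: assume tractability, invoke the cyclic-terms theorem of~\cite{BK12} to get cyclic polymorphisms of $\relstr{C}$ of every prime arity $p>|C|$, and push them through the sandwich to obtain cyclic polymorphisms of $(\OneInThree,\NAE)$. But there are two genuine gaps. First, the claimed classification of cyclic polymorphisms of $(\OneInThree,\NAE)$ is asserted rather than proved, and the sketch conflates cyclicity with full symmetry: invariance under the cyclic shift only forces $g$ to be constant on shift-orbits of subsets of $[p]$, and there are many orbits of each cardinality, so ``$g(S)$ depends only on $|S|$'' does not follow from ``a short case analysis'' on a few partition types. (Even the paper, working much harder, only establishes threshold behaviour on special families of arguments --- intervals and ``almost rectangles'' --- never on all subsets.) Second, and fatally, the concluding step cannot work as described. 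Knowing that every $\xi(c_p)$ is the threshold at roughly $p/3$ is not yet a contradiction: the template $(\mathbb{Z};x+y+z=1)$ sandwiches $(\OneInThree,\NAE)$, is tractable, and its affine cyclic polymorphisms $\sum a x_i + c$ (with $pa+3c=1$) map to exactly these thresholds. So no argument that only inspects cyclic operations, their minors, and a pigeonhole over primes can succeed --- it would apply verbatim to the infinite-domain sandwich and ``prove'' too much. Moreover, the target of your collapse, a clone homomorphism from $\mathrm{Pol}(\relstr{C})$ onto the projections, is directly precluded by the cyclic operations you started with; you would be assuming the conclusion.

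The obstacle you flag at the end (minion homomorphisms do not preserve composition) is precisely where the entire content of the paper lies, and absorption/Prague-strategy machinery is not what closes it. The paper instead composes $s$ with itself \emph{inside} $\mathrm{Pol}(\relstr{C})$, where composition is available, to get $t=s(s,\dots,s)$ of arity $p^2$, whose pushforward $g\circ t$ is invariant under cyclic shifts of the columns and under cyclic shifts within each column; the cover/NAE condition then forces threshold-at-$1/3$ behaviour on line segments and almost rectangles. Finiteness of $C$ enters only at the very end, via a pigeonhole on the $2|C|$ integers in $(p/3-2|C|,\,p/3)$ (this is why $p>60|C|$): two distinct column-weights $l_1<l_2<p/3$ satisfy $s(1^{l_1}0\cdots0)=s(1^{l_2}0\cdots0)$ in $C$, producing two almost rectangles with equal $t$-value but areas on opposite sides of $1/3$ --- the contradiction. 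Over $\mathbb{Z}$ or $\mathbb{Q}\setminus\{1/3\}$ this pigeonhole is unavailable, which is exactly why the infinite-domain reduction survives. Your proposal is missing this composition-plus-pigeonhole mechanism, which is the proof.
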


A fundamental question is whether each finite tractable PCSP template can be pp-constructed from an infinite tractable CSP template.
In~\cite{BG18b}, the authors conjectured that the answer is positive and even suggested a family of tractable CSPs that might solve all Boolean PCSPs. 
 
The class of all infinite-domain CSPs is very broad. In fact, each computational problem is equivalent to an infinite-domain CSP~\cite{BodG08}. However, some parts of the CSP theory can be extended to a quite rich class of structures, namely, reducts of finitely bounded homogeneous structures in finite signature, and some general results even to the broader class of $\omega$-categorical structures~\cite{Bod08,Pin15}. In particular, an algebraic criterion for NP-hardness is available~\cite{BarP16}, so it might be possible to generalize Theorem~\ref{thm:main_fake} to this setting (possibly with a different template). As $\omega$-categorical structures are, in a sense, close to finite and CSPs over them are solved by ``finitary'' algorithms, such a generalization would show that a polynomial time algorithm for some PCSP must be ``truly'' infinitary. 

Let us make a final remark before starting with the technicalities. 
Both algorithms~\cite{Bul17,Zhuk17} for the finite-domain CSP are extremely complex and simplifications are much desired. Theorem~\ref{thm:main_fake} supports the intuition that a simpler algorithm may require infinitary methods, such as CSPs over numerical domains~\cite{BodM17} ($\mathbb{Z}$, $\mathbb{Q}$, \dots). 


\section{Preliminaries} \label{sec:prelim}

In this section we give formal definitions of the concepts essential for the proof.  For an in depth introduction to CSP and PCSP, see \cite{BKW17,BKO18} and references therein. 

\subsection{PCSP}

A \emph{relational structure (of finite signature)} is a tuple $\relstr{A} = (A; R_1, \dots, R_n)$ where each $R_i \subseteq A^{\arity(R_i)}$ is a relation on $A$ of arity $\arity(R_i) \geq 1$. The structure $\relstr{A}$ is \emph{finite} if $A$ is finite.

Two relational structures $\relstr{A} = (A; R_1, \dots, R_n)$ and $\relstr{B} = (B; S_1, \dots, S_n)$ are \emph{similar} if
they have the same number of relations and $\arity(R_i) = \arity(S_i)$ for each $i \in \{1, \dots, n\}$. 

For two such similar relational structures $\relstr{A}$ and $\relstr{B}$, a \emph{homomorphism} from $\relstr{A}$ to $\relstr{B}$ is a mapping $f: A \to B$ such that $(f(a_1), f(a_2), \dots, f(a_k)) \in S_i$ whenever $i \in \{1, \dots, n\}$ and $(a_1, a_2, \dots, a_k) \in R_i$ where $k = \arity(R_i)$.

We write $\relstr{A} \to \relstr{B}$ if there exists a homomorphism from $\relstr{A}$ to $\relstr{B}$, and
$\relstr{A} \not\to \relstr{B}$ if there is none. 

\begin{definition}
A \emph{PCSP template} is a pair $(\relstr{A},\relstr{B})$ of similar relational structures such that $\relstr{A} \to \relstr{B}$.

The decision version of PCSP over $(\relstr{A},\relstr{B})$, written $\PCSP(\relstr{A},\relstr{B})$, is the following promise problem. Given a finite structure $\relstr{X}$ similar to $\relstr{A}$ (and $\relstr{B}$), output ``Yes.'' if $\relstr{X} \to \relstr{A}$ and output ``No.'' if $\relstr{X} \not\to \relstr{B}$. 

The search version of $\PCSP(\relstr{A},\relstr{B})$ is, given a structure $\relstr{X}$ similar to $\relstr{A}$ such that $\relstr{X} \to \relstr{A}$, find a homomorphisms $\relstr{X} \to \relstr{B}$. 
\end{definition}

In the case $\relstr{A}=\relstr{B}$ we talk about a CSP template (and simply write $\relstr{A}$ instead of $(\relstr{A},\relstr{A})$) and define $\CSP(\relstr{A}) = \PCSP(\relstr{A},\relstr{A})$. 

The decision version of the PCSP over $(\relstr{A},\relstr{B})$ can be reduced to the search version. For CSPs, it is known~\cite{BJK05} that these two versions are in fact equivalent, but it is an open problem whether they are equivalent for PCSPs as well.

\subsection{Constructions} \label{subsec:constr}

The two ingredients of a pp-construction are pp-powers and homomorphic relaxations. 

Homomorphic relaxation, called \emph{homomorphic sandwiching} in \cite{BG18b}, is a generalization of the concept of homomorphic equivalence between CSP templates. 

\begin{definition} \label{def:relax}
Let $(\relstr{A},\relstr{B})$ and $(\relstr{A}', \relstr{B}')$ be PCSP templates.
We say that $(\relstr{A}', \relstr{B}')$ is a \emph{homomorphic relaxation} of $(\relstr{A},\relstr{B})$ if there exist
homomorphisms $f: \relstr{A}' \to \relstr{A}$ and $g: \relstr{B} \to \relstr{B}'$.
\end{definition}

If $(\relstr{A}', \relstr{B}')$ is a homomorphic relaxation of $(\relstr{A},\relstr{B})$, then the trivial reduction, which does not change the input structure $\relstr{X}$, reduces  (the decision or search version of) $\PCSP(\relstr{A}',\relstr{B}')$ to  $\PCSP(\relstr{A}, \relstr{B})$. Both polynomial algorithms for $\PCSP(\OneInThree,\NAE)$ shown in the introduction come from this reduction with
$$
\relstr{A} = \relstr{B} = (A; R), \quad (x,y,z) \in R \mbox { iff } x+y+z=1\enspace,
$$
where $A = \mathbb{Z}$ in the first version of the algorithm and $A = \mathbb{Q} \setminus \{1/3\}$ in the second. In both cases, the mapping $f$ was the inclusion and the ``rounding'' mapping $g$ is defined by $g(x) = 0$ iff $x < 1/3$.

In order to define the other ingredient of a pp-construction, recall that 
a \emph{primitive positive formula} over a relational structure $\relstr{A}$ is an existentially quantified conjunction of
atomic formulas of the form $x_1 = x_2$ or $(x_{i_1}, \dots, x_{i_k}) \in R$ where $x_j$'s are variables and $R$ is a relation in $\relstr{A}$ of arity $k$.

\begin{definition}
Let $(\relstr{A},\relstr{B})$ and $(\relstr{A}' = (A'; R_1, \dots, R_n), \relstr{B}'=(B', S_1, \dots, S_n))$ be PCSP templates.

We say that $(\relstr{A}',\relstr{B}')$ is \emph{pp-definable} from $(\relstr{A},\relstr{B})$ if, for each $i \in \{1, \dots, n\}$,
there exists a primitive positive formula $\phi$ over $\relstr{A}$ such that $\phi$ defines $R_i$ and the formula, obtained by replacing each occurrence of a relation of $\relstr{A}$ by the corresponding relation in $\relstr{B}$, defines $S_i$. 

We say that $(\relstr{A}',\relstr{B}')$ is an $n$-th \emph{pp-power} of $(\relstr{A},\relstr{B})$ if $A'=A^n$, $B' = B^n$, and, if we view $k$-ary relations on $\relstr{A}'$ and $\relstr{B}'$ as $kn$-ary relations on $A$ and $B$, respectively, then $(\relstr{A}',\relstr{B}')$ is pp-definable from $(\relstr{A},\relstr{B})$.
\end{definition}

By combining these two constructions we get the notion of pp-construction.

\begin{definition}
We say that a PCSP template $(\relstr{A}, \relstr{B})$ pp-constructs a PCSP template $(\relstr{A}',\relstr{B}')$ if there exists
a sequence
$$
(\relstr{A}, \relstr{B}) = (\relstr{A}_1, \relstr{B}_1), \dots, (\relstr{A}_k, \relstr{B}_k) = (\relstr{A}', \relstr{B}')
$$
of PCSP templates such that each $(\relstr{A}_{i+1}, \relstr{B}_{i+1})$ is a pp-power or a homomorphic relaxation
of $(\relstr{A}_{i}, \relstr{B}_{i})$.
\end{definition}

It is not hard to see that if $(\relstr{A}, \relstr{B})$ pp-constructs $(\relstr{A}',\relstr{B}')$, then $\PCSP(\relstr{A}',\relstr{B}')$ reduces (even in log-space) to $\PCSP(\relstr{A},\relstr{B})$. The proof is similar to the analogous proof for CSP (see~\cite{BKW17}). An interesting alternative way for PCSP was given (but explicitly proved only for finite templates) in \cite{BKO18}.

However, in this paper, pp-constructions make only a cosmetic difference in the statement of Theorem~\ref{thm:main_fake} -- it is enough to prove the theorem for homomorphic relaxations. Indeed, it is well known (see~\cite{BOP18}) that if $(\relstr{A}, \relstr{B})$ pp-constructs $(\relstr{A}',\relstr{B}')$, then $(\relstr{A}',\relstr{B}')$ is a homomorphic relaxation of a pp-power of $(\relstr{A},\relstr{B})$.
Therefore, if a finite $\relstr{C}$ pp-constructs $(\OneInThree,\NAE)$, then $(\OneInThree,\NAE)$ is a homomorphic relaxation of a template $(\relstr{D},\relstr{D'})$, which is a pp-power of $\relstr{C}$. Then, clearly, $\relstr{D}=\relstr{D}'$ are finite and $\CSP(\relstr{D})$ reduces to $\CSP(\relstr{C})$.

\subsection{Cyclic polymorphisms}

For a PCSP template $(\relstr{A}, \relstr{B})$, a function $f: A^n \to B$ is called a \emph{polymorphism} of the template if it is a homomorphism from the $n$-th categorical power of $\relstr{A}$ to $\relstr{B}$. The basic fact of the algebraic theory of (P)CSP is that the set of polymorphisms determine the complexity of $\PCSP(\relstr{A},\relstr{B})$~(\cite{J98}, cf.~\cite{BKW17}).

We will only work with polymorphisms of CSP templates and we spell out the definition of a polymorphism in a more elementary way for this case. 

\begin{definition}
Let $\relstr{C}$ be a CSP template and $s: C^n \to C$ a function (also called an \emph{operation} in this context). 
We say that $s$ is a \emph{polymorphism} of $\relstr{C}$ if, for each relation $R$ in $\relstr{C}$ with $k=\arity(R)$ and 
all tuples $(a_1^1, \dots, a_k^1), \dots, (a_1^n, \dots, a_k^n) \in R$, we have
$$
(s(a_1^1, \dots, a_1^n), \dots, s(a_k^1, \dots, a_k^n)) \in R\enspace.
$$
\end{definition}

The proof of the main theorem is based on the following result from~\cite{BK12}. 

\begin{definition}
An operation $s: C^n \to C$ is called \emph{cyclic} if, for all $(a_1, \dots, a_n) \in C^n$, we have
$$
s(a_1,a_2, \dots, a_n) = s(a_2, \dots, a_n, a_1)\enspace.
$$
\end{definition}

\begin{theorem} \label{thm:cyclic}
Let $\relstr{C}$ be a finite CSP template. If $\CSP(\relstr{C})$ is not NP-complete, then $\relstr{C}$ has a cyclic polymorphism of arity $p$ for every prime number $p > |C|$. 
\end{theorem}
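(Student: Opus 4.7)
My plan is to prove the theorem by reducing to a statement about finite idempotent algebras and then invoking the absorption-theoretic machinery developed by Barto and Kozik.

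First I would reduce to the case of a finite idempotent algebra. Replacing $\relstr{C}$ by its core preserves the complexity of $\CSP(\relstr{C})$, and, for a core, expanding by all singleton unary relations $\{c\}$ for $c\in C$ is a pp-construction that does not change the complexity either. After these steps every polymorphism $f$ of the expanded template satisfies $f(c,\dots,c)=c$, so $\operatorname{Pol}(\relstr{C})$ becomes the clone of a finite idempotent algebra $\mathbb{A}=(C;\operatorname{Pol}(\relstr{C}))$; and a cyclic polymorphism of the expansion is, in particular, a cyclic polymorphism of the original $\relstr{C}$.

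Second, since $\CSP(\relstr{C})$ is not NP-complete, $\relstr{C}$ does not pp-construct the template of 3-SAT. By the algebraic criterion (Bulatov--Jeavons--Krokhin, Siggers, in the pp-construction formulation of \cite{BOP18}), this is equivalent to $\mathbb{A}$ carrying a non-trivial \emph{Taylor} idempotent term identity; concretely, one may take Siggers' six-ary term, or a weak near-unanimity term, as the starting point for the combinatorics.

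Third, I would upgrade this Taylor term to a $p$-ary cyclic term. A $p$-ary cyclic polymorphism of $\relstr{C}$ corresponds, via the standard clone/free-algebra correspondence, to a fixed point under the $\mathbb{Z}_p$-action by cyclic shifts of the generators in the $p$-generated free algebra of the variety generated by $\mathbb{A}$; equivalently, for every $(a_1,\dots,a_p)\in C^p$ the subalgebra of $\mathbb{A}^p$ generated by the $p$ cyclic shifts of $(a_1,\dots,a_p)$ must contain a constant (diagonal) tuple. A single fixed point in the finite free algebra already yields one cyclic polymorphism that works uniformly on all inputs, so the task is really this one combinatorial existence statement.

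The main obstacle is the last step, which is the technical heart of \cite{BK12}. Their argument proceeds by induction on the size of the generated subalgebra, combining the Absorption Theorem with a Loop-Lemma-style analysis of the natural digraph whose edges record the cyclic shift. The hypothesis $p>|C|$ enters through pigeonhole: any $p$-tuple over $C$ has a repeated entry, and primality of $p$ then ensures that in the $\mathbb{Z}_p$-action every orbit has size $1$ or $p$, so the structural rigidity supplied by the Taylor identity can be propagated into a genuine fixed point. The difficulty is precisely that a Taylor identity is a very flexible hypothesis while the cyclic identity is rigid, and the passage between them requires delicate bookkeeping of absorbing subuniverses and subdirect decompositions within the variety generated by $\mathbb{A}$.
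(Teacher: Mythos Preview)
The paper does not prove this theorem at all: it is quoted as a known result from \cite{BK12} (``The proof of the main theorem is based on the following result from~\cite{BK12}'') and is used as a black box in the subsequent argument. There is therefore no proof in the paper to compare your proposal against.

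Your outline is a faithful high-level summary of the actual Barto--Kozik argument: pass to the idempotent core with constants, use the BJK hardness criterion to obtain a Taylor term, and then upgrade the Taylor term to a $p$-ary cyclic term via the absorption/loop-lemma machinery. That last step is, as you acknowledge, the entire substance of \cite{BK12}, and your proposal does not actually carry it out --- you describe the shape of the argument (fixed points of the $\mathbb{Z}_p$-action on the free algebra, pigeonhole from $p>|C|$, primality forcing orbit sizes $1$ or $p$) but defer the real work back to the cited paper. So what you have written is closer to an annotated citation than a self-contained proof; if the intent is to reproduce the result, the induction on absorbing subuniverses and the subdirect analysis would need to be spelled out in full.
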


We remark that cyclic operations characterize the borderline between  NP-complete and tractable CSPs -- whenever $\relstr{C}$ has a cyclic polymorphism of arity at least 2, then $\CSP(\relstr{C})$ is tractable~\cite{Bul17,Zhuk17}. In fact, cyclic polymorphisms provide currently the strongest characterization of the borderline in the sense that the other important types of operations (such as the Sigger's operations~\cite{Sig10,KMM14} or the weak near-unanimity operations~\cite{MM08}) can be obtained from a cyclic operation by an identification of variables. The proof of Theorem~\ref{thm:main_fake} could still be simplified having a yet stronger (or alternative) characterization at hand. See Section~\ref{sec:conclusion} for a concrete open problem in this direction.

\section{Infinity is necessary}

In this section we prove the main theorem. As explained in Subsection~\ref{subsec:constr}, it is enough to prove the following result.

\begin{theorem}
Let $\relstr{C} = (C; R)$ be a finite relational structure with ternary $R \subseteq C^3$ such that $(\OneInThree, \NAE)$ is a homomorphic relaxation of $(\relstr{C},\relstr{C})$.
Then $\CSP(\relstr{C})$ is NP-complete.
\end{theorem}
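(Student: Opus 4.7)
The plan is to apply Theorem~\ref{thm:cyclic} in its contrapositive form. Assuming that $\CSP(\relstr{C})$ is not NP-complete, Theorem~\ref{thm:cyclic} yields a cyclic polymorphism $s \colon C^p \to C$ of $\relstr{C}$ for every prime $p > \abs{C}$; I will derive a contradiction for a suitably large such $p$. First I would pass to the core of $\relstr{C}$ (the hypotheses are preserved: precomposing $f$ with a retraction and postcomposing $g$ with the inclusion give the analogous homomorphisms for the core). Then, composing $s$ with the inverse of the automorphism $c \mapsto s(c, \dots, c)$ of the core, I may assume $s$ is idempotent. Set $a = f(0)$ and $b = f(1)$; the basic tuples $(b,a,a), (a,b,a), (a,a,b) \in R$ combined with $g$ being a homomorphism to $\NAE$ force $g(a) \ne g(b)$, and I normalize so that $g(a) = 0$ and $g(b) = 1$.

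Next I would study the restriction $F \colon \{a, b\}^p \to C$ of $s$. By cyclicity and idempotency of $s$, $F$ is invariant under cyclic shift of its coordinates and satisfies $F(a, \dots, a) = a$ and $F(b, \dots, b) = b$. The critical property is: whenever $v^{(1)}, v^{(2)}, v^{(3)} \in \{a, b\}^p$ form a ``$\OneInThree$-partition'' --- meaning that in every coordinate $i$ exactly one of $v^{(1)}_i, v^{(2)}_i, v^{(3)}_i$ equals $b$ --- the $p$ columns of the induced $3 \times p$ matrix are all basic tuples of $R$, so applying $s$ row-wise gives $(F(v^{(1)}), F(v^{(2)}), F(v^{(3)})) \in R$, and applying $g$ yields a triple in $\NAE$.

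The target contradiction is to exhibit a $\OneInThree$-partition on which $F$ is monochromatic, i.e.\ $F(v^{(1)}) = F(v^{(2)}) = F(v^{(3)}) = c$ for some $c \in C$; this would give $(c, c, c) \in R$ and thus the forbidden constant triple $(g(c), g(c), g(c))$, which cannot lie in $\NAE$. To produce such a partition I would exploit pigeonhole combined with cyclic invariance. A natural first source of collisions is the sequence of $p + 1$ cyclic-window values $\psi(k) = F(\chi_{\{1, \dots, k\}})$ for $k = 0, 1, \dots, p$; these lie in $C$ of size $\leq \abs{C}$, and the partition constraints coming from contiguous $3$-arc partitions give $(\psi(k_1), \psi(k_2), \psi(k_3)) \in R$ whenever $k_1 + k_2 + k_3 = p$. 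Augmenting with constraints from non-contiguous $\OneInThree$-partitions (all governed by orbit representatives via cyclic invariance) and iterating pigeonhole as $p \to \infty$, I would aim to find three indices $k_1, k_2, k_3$ summing to $p$ with $\psi(k_1) = \psi(k_2) = \psi(k_3)$, yielding the monochromatic partition.

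The main obstacle will be this final combinatorial step. A Boolean-level pigeonhole on $g \circ \psi$ alone cannot suffice, because the threshold $g(\psi(k)) = 1 \iff k \geq \lceil p/3 \rceil$ satisfies the NAE partition constraint for every prime $p \geq 5$: the inequalities $3\lceil p/3\rceil > p$ and $3(\lceil p/3\rceil - 1) < p$ rule out both the all-zero and the all-one cases for contiguous triples summing to $p$. The argument must therefore exploit the full $C$-valued structure of $F$ (not merely the Boolean quotient $g \circ F$) together with the precise bound $\abs{C} < p$, combining pigeonhole with the additive and orbit-theoretic structure of $\mathbb{Z}_p$ for $p$ a large prime. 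I expect this combinatorial step to carry the principal technical weight of the proof.
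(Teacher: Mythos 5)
Your setup (contrapositive of the cyclic-polymorphism theorem, normalization of $f$ and $g$, the observation that covers are mapped by the polymorphism into $R$ and hence by $g$ into $\NAE$) matches the paper's opening moves, and the detour through the core and idempotency is harmless though unnecessary. The genuine gap is in the final combinatorial step, and it is not merely a technical difficulty: the plan of finding $k_1+k_2+k_3=p$ with $\psi(k_1)=\psi(k_2)=\psi(k_3)$ cannot succeed even using the full $C$-valued structure. Consider the model in which $F$ on cyclic intervals depends only on the interval length and takes exactly two values $c_0,c_1\in C$ according to whether the length is below or above $p/3$; since any triple summing to $p$ must contain one index below $p/3$ and one above, no monochromatic triple exists, and all cover constraints on contiguous windows are satisfied. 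So the obstruction you identify for $g\circ\psi$ already kills the $C$-valued version. Moreover, the non-contiguous covers you hope to ``augment'' with do not constrain the $\psi(k)$ at all: cyclic invariance only identifies arguments up to cyclic shift, so the $F$-value of a non-interval subset is an independent unknown, not a function of its cardinality. (Your approach would essentially go through if one had polymorphisms whose restriction to $\{a,b\}^n$ depends only on the number of $b$'s --- but whether such polymorphisms must exist is precisely the open Question at the end of the paper.)

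The paper escapes this trap by squaring the arity: it forms $t=s\circ(s,\dots,s)$ of arity $p^2$, views arguments as $p\times p$ zero-one matrices, and thereby gets \emph{two independent} cyclic symmetries (shifting each column individually via the inner copies of $s$, and shifting the rows via the outer $s$). The one-dimensional analysis you propose appears there only as the base case (the tuples $\ttt{i}$ are ``tame'', i.e.\ behave exactly like the threshold model --- confirming that no contradiction is available at that level). The extra dimension then supports an induction showing all ``almost rectangles'' are tame, and the contradiction comes from a different endgame than yours: not a monochromatic cover, but a collision $t(X_1)=t(X_2)$ between two almost rectangles whose areas straddle $1/3$, obtained by a pigeonhole argument producing $l_1<l_2$ near $p/3$ with $s(1^{l_1}0\cdots0)=s(1^{l_2}0\cdots0)$. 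Tameness forces $g(t(X_1))\neq g(t(X_2))$, contradicting the collision. Without the second layer of composition, your argument has no analogue of this step.
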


Assume that $\CSP(\relstr{C})$ is not NP-complete and let $f: \OneInThree \to \relstr{C}$ and $g: \relstr{C} \to \NAE$ be homomorphisms from the definition of homomorphic relaxation, Definition~\ref{def:relax}.

Since $gf$ is a homomorphism, this mapping applied component-wise to the 1-in-3 tuple $(0,0,1)$ is a not-all-equal tuple. In particular $f(0) \neq f(1)$. We rename the elements of $C$ so that $\{0,1\} \subseteq C$ and $f$ is the inclusion. As $f$ and $g$ are homomorphisms, we get
$$
\{0,1\} \subseteq C, \quad \{(1,0,0), (0,1,0), (0,0,1)\} \subseteq R
$$
and 
$$
\neg (g(a)=g(b)=g(c)) \mbox{ whenever } (a,b,c) \in R\enspace.
$$

By Theorem~\ref{thm:cyclic}, $\relstr{C}$ has a cyclic polymorphism of any prime arity $p > |C|$. We fix a cyclic polymorphism 
$$
s \mbox{ of prime arity }
p > 60 |C|\enspace.
$$
Next we define an operation $t$ on $C$ of arity $p^2$ by
$$
t(x_{11}, x_{12}, \ldots, x_{1p}, x_{21}, x_{22}, \ldots x_{2p}, x_{31}, \ldots, \ldots, x_{pp}) \hfill
$$
\begin{align*}
 = s(&s(x_{11},x_{21}, \ldots, x_{p1}), \\
     &s(x_{12},x_{22}, \ldots, x_{p2}), \\
		 &\dots \\
		 &s(x_{1p},x_{2p}, \ldots, x_{pp}))\enspace.
\end{align*}

It will be convenient to organize the arguments of $t$ into a $p \times p$ matrix $X$ whose entry in the $i$-th row and $j$-th column is $x_{ij}$, so the value
$$
t 
\left(
\begin{array}{cccc}
x_{11} & x_{12} & \cdots & x_{1p} \\
x_{21} & x_{22} & \cdots & x_{2p} \\
\vdots & \vdots & \ddots & \vdots \\
x_{p1} & x_{p2} & \cdots & x_{pp}
\end{array}
\right)
$$
is obtained by applying $s$ to the columns and then $s$ to the results. 

We introduce several  concepts for zero-one matrices, the only important arguments of $t$ for the proof.

\begin{definition}
Let $X=(x_{ij}), Y$ be $p \times p$ zero-one matrices. The \emph{area of $X$} is the fraction of ones and is denoted
$$
\lambda(X) = \left(\sum_{i,j} x_{ij}\right)/p^2\enspace.
$$

The matrices $X,Y$ are called \emph{$g$-equivalent}, denoted $X \sim Y$, if $g(t(X)) = g(t(Y))$.

The matrix $X$ is called \emph{tame} if 
\begin{align*}
&X \sim 0_{p \times p} \quad  \mbox{ if } \quad \lambda(X) < 1/3 \\
\mbox{and } &X \sim 1_{p \times p} \quad \mbox{ if } \quad \lambda(X) > 1/3 
\end{align*}
where $0_{p \times p}$ stands for the zero matrix and $1_{p \times p}$ for the all-ones matrix.
\end{definition}

Observe that the equivalence $\sim$ has two blocks, so, e.g., $X \not\sim Y \not\sim Z$ implies $X \sim Z$. Also recall that $p>3$ is a prime number, so the area of $X$ is never equal to $1/3$.

The proof now proceeds as follows. We show that certain matrices, called ``almost rectangles'', are tame. The proof is by induction (although the proof logic, as presented, is a bit different). Subsection~\ref{subsec:lines} provides the base case and Subsection~\ref{subsec:rect} handles the induction step.
In Subsection~\ref{subsec:contra}, we construct two tame matrices $X_1$, $X_2$ such that $\lambda(X_1)<1/3$ and $\lambda(X_2) > 1/3$, but $t(X_1) = t(X_2)$ (because the corresponding columns of $X_1$ and $X_2$ will be evaluated by $s$ to the same elements). This  gives us a contradiction since $0_{p \times p} \not\sim 1_{p \times p}$ as we shall see.

\subsection{Covers}

Before launching into the proof, we introduce an additional concept and state a consequence of the fact that $s$ is a polymorphism.

\begin{definition}
A triple $X,Y,Z$ of $p \times p$ zero-one matrices is called a \emph{cover} if, for every $1 \leq i,j \leq p$, exactly one of $x_{ij},y_{ij},z_{ij}$ is equal to one.
\end{definition}

\begin{lemma} \label{lem:nae}
If $X,Y,Z$ is a cover, then $X,Y,Z$ are not all $g$-equivalent. 
\end{lemma}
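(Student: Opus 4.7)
The plan is to exploit the two polymorphism properties of $s$ together with the homomorphism property of $g$ to show $(t(X), t(Y), t(Z)) \in R$, from which $g$ immediately produces the not-all-equal conclusion.

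First, I would unpack the cover hypothesis at the level of single entries. Fix $i, j \in \{1, \ldots, p\}$. By definition of a cover, exactly one of $x_{ij}, y_{ij}, z_{ij}$ equals $1$, so the triple $(x_{ij}, y_{ij}, z_{ij})$ lies in $\{(1,0,0),(0,1,0),(0,0,1)\}$, which is contained in $R$ by the reduction set up at the start of the section (recall that $f$ is the inclusion of $\{0,1\}$ into $C$ and is a homomorphism $\OneInThree \to \relstr{C}$).

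Next I would apply the polymorphism $s$ column-by-column. Fix a column index $j$ and assemble the $p$ triples $(x_{ij}, y_{ij}, z_{ij}) \in R$ for $i = 1, \ldots, p$. Since $s$ is a $p$-ary polymorphism of $\relstr{C}$, applying it coordinatewise to these triples yields
\[
\bigl(s(x_{1j}, \ldots, x_{pj}),\ s(y_{1j}, \ldots, y_{pj}),\ s(z_{1j}, \ldots, z_{pj})\bigr) \in R.
\]
Thus the triple of column evaluations for the $j$-th column lies in $R$. Collecting these for $j = 1, \ldots, p$ gives $p$ new tuples in $R$. Applying $s$ once more, again coordinatewise, and using the definition of $t$, I obtain
\[
(t(X), t(Y), t(Z)) \in R.
\]

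Finally, since $g : \relstr{C} \to \NAE$ is a homomorphism, $(g(t(X)), g(t(Y)), g(t(Z)))$ is a not-all-equal tuple, so these three values cannot all coincide. By the definition of $\sim$, this says $X, Y, Z$ are not all $g$-equivalent. There is essentially no obstacle here: the only thing to get right is the bookkeeping of the two nested applications of the polymorphism $s$, matching the nested structure of $t$.
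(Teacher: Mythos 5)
Your proof is correct and follows the same route as the paper: entries of a cover give triples in $R$, the nested applications of the polymorphism $s$ show $(t(X),t(Y),t(Z)) \in R$, and the homomorphism $g$ to $\NAE$ yields the not-all-equal conclusion. The only difference is that you spell out explicitly the step the paper summarizes as ``$t$ preserves $R$ because $s$ does.''
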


\begin{proof}
By the definition of a cover, the $ij$-th coordinates of $X$, $Y$, $Z$ are in $\{(0,0,1),(0,1,0),(1,0,0)\} \subseteq R$ for each $i,j$. 
Since $t$ preserves $R$ (because $s$ does), the triple $(t(X),t(Y),t(Z))$ is in $R$ as well. 
Finally, $g$ is a homomorphism from $\relstr{C}$ to $\NAE$, therefore $g(t(X)),g(t(Y)),g(t(Z))$ are not all equal. 
In other words, $X$, $Y$, $Z$ are not all $g$-equivalent, as claimed.
\end{proof}

\subsection{Line segments are tame} \label{subsec:lines}

In this subsection it will be more convenient to regard the arguments of $t$ as a tuple $\vc{x} = (x_{11},x_{12}, \ldots)$ of length $p^2$ rather than a matrix. The concepts of the area, $g$-equivalence, tameness, and cover is extended to tuples in the obvious way.
Since $p>3$ is a prime number, $p^2$ is 1 modulo 3. Let $q$ be such that
$$
 p^2 = 3q+1\enspace.
$$
Moreover, let $\ttt{i}$ denote the following tuple of length $p^2$.
$$
\ttt{i} = (\underbrace{1,1, \cdots, 1}_{i \times }, 0,0, \cdots 0)
$$
We prove in this subsection that all such tuples are tame.
We first recall a well-known fact.

\begin{lemma}
The operation $t$ is cyclic.
\end{lemma}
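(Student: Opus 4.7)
The plan is to verify that a single cyclic left shift of the $p^2$ arguments of $t$ leaves its value unchanged; cyclic invariance under arbitrary cyclic shifts then follows by iteration. The key is that such a shift admits a clean description at the level of the matrix $X$ presented to $t$, after which the identity reduces to two uses of the cyclicity of $s$, one ``within a column'' and one ``across columns.''

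First I would unpack what one cyclic left shift does to the row-major matrix $X$. An entry at old position $(i,j)$ with $j > 1$ moves to new position $(i, j-1)$; an entry at $(i, 1)$ with $i > 1$ moves to $(i-1, p)$; and $x_{11}$ wraps around to position $(p, p)$. Re-reading the shifted matrix by columns, I would conclude that its column $j$ equals the old column $j+1$ for $1 \le j \le p-1$, whereas its column $p$ is the old first column cyclically shifted upward by one, namely $(x_{21}, x_{31}, \dots, x_{p1}, x_{11})$. This identification is the only mildly fiddly part of the argument; it is elementary modular arithmetic using that consecutive rows are separated by exactly $p$ positions in the row-major listing.

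Given that description,
\[
t(\text{shifted } X) = s\bigl(s(\text{old col }2),\, \dots,\, s(\text{old col }p),\, s(\text{cyclic shift of old col }1)\bigr).
\]
Cyclicity of $s$ applied to the last inner call rewrites $s(\text{cyclic shift of old col }1)$ as $s(\text{old col }1)$, and cyclicity of $s$ applied at the outer level then rotates this argument to the front, yielding $t(X)$. So a single cyclic left shift of the arguments preserves $t$, which is precisely what cyclicity of $t$ demands.
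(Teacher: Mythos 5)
Your proof is correct and follows essentially the same route as the paper: both identify the effect of a single cyclic shift on the row-major matrix (new columns are old columns $2,\dots,p$ followed by a cyclic shift of old column $1$) and then invoke the cyclicity of $s$ once at the inner level and once at the outer level. The only cosmetic difference is the order in which the two applications of cyclicity are made, and your closing remark about iterating is unneeded since the definition of cyclicity asks only for invariance under a single shift.
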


\begin{proof}
By cyclically shifting the arguments we get the same result:
\begin{align*}
&t(x_{12}, \cdots, x_{pp},x_{11}) =
t 
\left(
\begin{array}{ccccc}
x_{12} & x_{13} & \cdots & x_{1p} & x_{21} \\
x_{22} & x_{23} & \cdots & x_{2p} & x_{31} \\
\vdots & \vdots & \ddots & \vdots & \vdots\\
x_{p2} & x_{p3} & \cdots & x_{pp} & x_{11} 
\end{array}
\right) \\
&= t 
\left(
\begin{array}{ccccc}
x_{21} & x_{12} & x_{13} & \cdots & x_{1p}  \\
x_{31} & x_{22} & x_{23} & \cdots & x_{2p}  \\
\vdots & \vdots & \ddots & \vdots & \vdots \\
x_{11} & x_{p2} & x_{p3} & \cdots & x_{pp}   
\end{array}
\right) \\
&=t 
\left(
\begin{array}{cccc}
x_{11} & x_{12} & \cdots & x_{1p} \\
x_{21} & x_{22} & \cdots & x_{2p}  \\
\vdots & \vdots & \ddots & \vdots \\
x_{p1} & x_{p2} & \cdots & x_{pp}  
\end{array}
\right) = 
t(x_{11}, x_{12}, \cdots, x_{pp})\enspace,
\end{align*}
where the second equality uses the cyclicity of the outer ``s'' in the definition of $t$, while the third one the cyclicity of the first inner ``s''.
\end{proof}

The following lemma is proved by induction on $i = 0,1, \dots, q$.

\begin{lemma}
For each $i \in \{0,1, \dots, q\}$, we have 
\begin{align*}
&\ttt{q-i} \sim \ttt{q-i+1} \sim \cdots \sim \ttt{q} \\ 
&\not\sim \ttt{q+1} \sim \cdots \sim \ttt{q+i} \sim \ttt{q+i+1}\enspace.
\end{align*}
\end{lemma}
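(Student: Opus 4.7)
The plan is to induct on $i$, using Lemma~\ref{lem:nae}, the cyclicity of $t$ proved above, and the fact that $\sim$ has at most two classes (since $g$ takes values in $\{0,1\}$). The unifying construction is this: to relate two line segments $\ttt{a}$ and $\ttt{b}$, I partition $\{1,\dots,p^2\}$ into three consecutive blocks whose sizes sum to $p^2=3q+1$; their indicator tuples form a cover, and by the cyclicity of $t$ each such indicator is $\sim$-equivalent to a line segment $\ttt{k}$ whose length equals the corresponding block size. Lemma~\ref{lem:nae} then forbids all three from being $\sim$-equivalent.

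For the base case $i=0$, I need $\ttt{q}\not\sim\ttt{q+1}$. Partitioning the $p^2$ coordinates into consecutive blocks of sizes $q+1,q,q$ yields a cover $(X,Y,Z)$ with $X\sim\ttt{q+1}$ and $Y\sim Z\sim\ttt{q}$, so Lemma~\ref{lem:nae} forces $X\not\sim Y$, as required. For the inductive step, assuming the claim for some $i<q$, I establish two new equivalences. First, (a) $\ttt{q-i-1}\sim\ttt{q-i}$: I use blocks of sizes $q-i-1,q+1,q+i+1$. By the induction hypothesis $\ttt{q+1}\sim\ttt{q+i+1}$, so the two larger pieces of the resulting cover lie in the same $\sim$-class; Lemma~\ref{lem:nae} then places $\ttt{q-i-1}$ in the opposite class, which by induction contains $\ttt{q-i}$. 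Second, (b) $\ttt{q+i+1}\sim\ttt{q+i+2}$: I use blocks of sizes $q+i+2,q,q-i-1$. The induction hypothesis together with (a) just proved puts both of the smaller pieces in the $\ttt{q}$-class, so Lemma~\ref{lem:nae} forces $\ttt{q+i+2}$ into the opposite class, which contains $\ttt{q+i+1}$.

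The hard part is mostly bookkeeping. The block sizes above are non-negative precisely when $i+1\leq q$, which is exactly the range where the inductive step is needed, and they always sum to $p^2$. The conceptual move of promoting ``$\ttt{a}\not\sim\ttt{b}$'' to ``$\ttt{a}\sim$ every element of the opposite class from $\ttt{b}$'' is immediate once one notices that $\sim$ has at most two classes. The only real subtlety is choosing each cover so that the induction hypothesis actually pins two of its three pieces into a common $\sim$-class; once that is arranged, Lemma~\ref{lem:nae} produces the desired new equivalence essentially for free.
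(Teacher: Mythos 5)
Your proof is correct and follows essentially the same route as the paper: the same base-case cover of block sizes $(q,q,q+1)$, the same two covers in the inductive step (yours are the paper's $\ttt{q-i},\ttt{q+1},\ttt{q+i}$ and $\ttt{q-i},\ttt{q},\ttt{q+i+1}$ after reindexing $i\mapsto i+1$), and the same use of cyclicity of $t$ plus the two-class structure of $\sim$ to convert Lemma~\ref{lem:nae} into the new equivalences. No gaps.
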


\begin{proof}
For the first induction step, $i=0$,
let $\vc{x} = \ttt{q}$, let $\vc{y}$ be $\ttt{q}$ (cyclically) shifted $q$ times to the right (so the first 1 is at the $(q+1)$-st position), and let $\vc{z}$ be $\ttt{q+1}$ shifted $2q$ times to the right. 
The tuples $\vc{x},\vc{y},\vc{z}$ form a cover, therefore they are not all $g$-equivalent by Lemma~\ref{lem:nae}.
But $t$ is cyclic, thus $t(\vc{x})=t(\vc{y}) = t(\ttt{q})$ and $t(\vc{z}) = t(\ttt{q+1})$. It follows that $\ttt{q},\ttt{q},\ttt{q+1}$ are not all $g$-equivalent and we get $\ttt{q} \not\sim \ttt{q+1}$.

Now we prove the claim for $i>0$ assuming it holds for $i-1$. To verify $\ttt{q-i} \sim \ttt{q-i+1}$ consider
$\ttt{q-i}$, $\ttt{q+1}$, $\ttt{q+i}$. Since $(q-i)+(q+1)+(q+i) = 3q+1=p^2$, these tuples can be shifted to form a cover and then the same argument as above gives us that $\ttt{q-i}$, $\ttt{q+1}$, $\ttt{q+i}$ are not all $g$-equivalent. But $\ttt{q+1} \sim \ttt{q+i}$ by the induction hypothesis, therefore $\ttt{q-i} \not\sim \ttt{q+1}$. Since $\ttt{q+1} \not\sim \ttt{q-i+1}$ (again by the induction hypothesis), we get $\ttt{q-i} \sim \ttt{q-i+1}$, as required.

It remains to check $\ttt{q+i} \sim \ttt{q+i+1}$. This is done in a similar way, using the tuples $\ttt{q-i}$, $\ttt{q}$, $\ttt{q+i+1}$. 
\end{proof}

We have proved that $\ttt{0} \sim \dots \sim \ttt{q} \not\sim \ttt{q+1} \sim \dots \sim \ttt{2q+1}$.
Using the same argument as in the previous lemma once more for $\ttt{0},\ttt{p^2-i},\ttt{i}$ with $p^2 \geq i > 2q+1$ we get $\ttt{i} \not\sim \ttt{0}$. In summary,
$\ttt{i} \sim \ttt{0}$ whenever $i \leq q$ and $\ttt{i} \sim \ttt{p^2} \not\sim \ttt{0}$ when $i \geq q+1$. 
Observing that $\lambda(\ttt{i})<1/3$ iff $i \leq q$ we obtain the following lemma.

\begin{lemma} \label{lem:lines_tame}
Each $\ttt{i}$, $i \in \{0, 1, \cdots, p^2\}$, is tame and $\ttt{0} \not\sim \ttt{p^2}$. 
\end{lemma}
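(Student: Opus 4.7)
The plan is to extend the chain of $g$-equivalences established by the previous lemma from $\ttt{2q+1}$ up to $\ttt{p^2}=\ttt{3q+1}$, and then read off the tameness conditions from the threshold $\lambda(\ttt{i})<1/3 \iff i\leq q$. First I would apply the preceding lemma with $i=q$, which directly yields
$$\ttt{0}\sim \ttt{1}\sim\cdots\sim\ttt{q}\not\sim\ttt{q+1}\sim\cdots\sim\ttt{2q+1}.$$
This handles tameness for all $i\leq 2q+1$, provided we also know that the remaining values $\ttt{2q+2},\ldots,\ttt{p^2}$ lie in the same $\sim$-class as $\ttt{q+1}$ (and hence as $\ttt{p^2}$).

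The next step is to observe that $\sim$ has at most two equivalence classes, because $g$ takes values in $\{0,1\}$. Since the previous lemma already gave $\ttt{q}\not\sim\ttt{q+1}$, both classes are occupied, so to place $\ttt{i}$ in the correct class it suffices to show $\ttt{i}\not\sim\ttt{0}$ for each $i$ with $2q+1<i\leq p^2$. For such an $i$, set $j=p^2-i$, so $0\leq j<q$. The areas $0+j+i=p^2$ allow the tuples $\ttt{0},\ttt{j},\ttt{i}$ to be cyclically shifted so that every coordinate has exactly one $1$ among them, i.e., they form a cover. By Lemma~\ref{lem:nae}, combined with cyclicity of $t$ (so shifts do not affect $g(t(\cdot))$), these three tuples are not all $g$-equivalent. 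But $j\leq q$, so by step one $\ttt{j}\sim\ttt{0}$, forcing $\ttt{i}\not\sim\ttt{0}$. Applying the two-class observation, $\ttt{i}$ lies in the class of $\ttt{p^2}$.

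The case $i=p^2$ deserves a separate mention: the triple $(\ttt{0},\ttt{0},\ttt{p^2})$ is itself a cover (each coordinate has pattern $(0,0,1)$), so Lemma~\ref{lem:nae} immediately yields $\ttt{0}\not\sim\ttt{p^2}$, which is the second assertion of the lemma. Finally, because $p>3$ is prime, $p^2/3$ is not an integer; the inequality $\lambda(\ttt{i})=i/p^2<1/3$ is equivalent to $i\leq q$, and $\lambda(\ttt{i})>1/3$ to $i\geq q+1$. Combining these with the $\sim$-chains obtained above gives exactly the tameness condition for every $i\in\{0,1,\ldots,p^2\}$.

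I do not anticipate a real obstacle here: the previous lemma did the inductive heavy lifting, and the current statement is essentially a bookkeeping argument reducing the remaining range $\{2q+2,\ldots,p^2\}$ via one further application of the cover trick. The only point that requires a moment of care is verifying that the two-class structure of $\sim$ legitimately converts the non-equivalence $\ttt{i}\not\sim\ttt{0}$ into the positive $g$-equivalence $\ttt{i}\sim\ttt{p^2}$ needed for tameness, which follows from $\ttt{q}\not\sim\ttt{q+1}$ already in hand.
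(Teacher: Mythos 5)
Your proof is correct and takes essentially the same route as the paper: the paper likewise extends the chain from the previous lemma by applying the cover argument to $\ttt{0},\ttt{p^2-i},\ttt{i}$ for $p^2\geq i>2q+1$, uses the two-block structure of $\sim$ to place these tuples in the class of $\ttt{p^2}$, and reads off tameness from $\lambda(\ttt{i})<1/3\iff i\leq q$. The only cosmetic difference is that you treat $i=p^2$ (the cover $\ttt{0},\ttt{0},\ttt{p^2}$) as a separate case, whereas the paper folds it into the general range.
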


\subsection{Almost rectangles are tame} \label{subsec:rect}

We start by introducing a special type of zero-one matrices. 

\begin{definition}
Let $1 \leq k_1, \dots, k_p \leq p$. By
$$
\rrr{k_1,k_2,\dots,k_p}
$$
we denote the matrix whose $i$-th column begins with $k_i$ ones followed by ($p-k_i$) zeros, for each $i \in \{1, \dots, p\}$.

An \emph{almost rectangle} is a matrix of the form $\rrr{k,k, \dots, k, l, l, \dots, l}$ (the number of $k$'s can be arbitrary, including 0 or $p$) where $0 \leq k-l \leq 5|C|$. The quantity $k-l$ is referred to as the \emph{size of the step}. 
\end{definition}

In the remainder of this subsection we prove the following proposition.

\begin{proposition} \label{prop:tame}
Each almost rectangle is tame.
\end{proposition}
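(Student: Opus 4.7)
The plan is to prove Proposition~\ref{prop:tame} by induction on the step size $d = k - l$, where $d$ ranges over $\{0, 1, \ldots, 5|C|\}$. The base cases $d \in \{0, 1\}$ are immediate from Lemma~\ref{lem:lines_tame}: in these cases the almost rectangle $X = [k^m, l^{p-m}]$, read in row-major order, coincides exactly with the line-segment tuple $\ttt{lp + md}$, already shown to be tame.

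For the inductive step, fix $d \geq 2$ and assume every almost rectangle of step strictly less than $d$ is tame. Let $X = [k^m, l^{p-m}]$ with $k - l = d$. By the 0/1 symmetry (swapping the roles of $\ttt{0}$ and $\ttt{p^2}$) it suffices to treat $\lambda(X) < 1/3$ and prove $X \sim 0_{p \times p}$; equivalently, since $\sim$ has only two blocks, to prove $X \not\sim 1_{p \times p}$. The strategy is to construct a cover $(X, Y, Z)$ in which $Y$ and $Z$ are $g$-equivalent to almost rectangles of step strictly less than $d$ and both have area exceeding $1/3$. The inductive hypothesis then gives $Y \sim 1_{p \times p} \sim Z$, so Lemma~\ref{lem:nae} forces $X \not\sim Y$, as required.

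The cover is built column by column. Choose integer parameters $y_1 \in \{0, \ldots, p - k\}$ and $a \in \{1, \ldots, d - 1\}$ (non-empty because $d \geq 2$). In each column $j \leq m$, outside the $k$ rows already occupied by $X$'s ones, assign a contiguous block of $y_1$ rows to $Y$ and the remaining $z_1 = p - k - y_1$ rows to $Z$; in each column $j > m$, assign $y_2 = y_1 + a$ rows to $Y$ and $z_2 = p - l - y_2$ rows to $Z$. Cyclically shifting each column (permitted since $s$ is cyclic, and hence preserves $t$) moves all the ones to the top, making $Y$ the almost rectangle $[y_1^m, y_2^{p-m}]$ of step $a$ and $Z$ the almost rectangle $[z_1^m, z_2^{p-m}]$ of step $d - a$. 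Both steps are strictly less than $d$, so both $Y$ and $Z$ are tame by induction, and the identity $\lambda(X) + \lambda(Y) + \lambda(Z) = 1$ follows from the cover condition.

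The main obstacle is integrality. To guarantee both $\lambda(Y), \lambda(Z) > 1/3$, the value $y_1$ must lie in an interval of continuous width $p(1/3 - \lambda(X))$, which can be as small as $1/(3p)$ when $\lambda(X)$ is near its largest admissible value $q/p^2$. The parameter $a$ provides additional degrees of freedom: incrementing $a$ translates the feasibility interval by $(p-m)/p$, and because $p$ is prime with $0 < m < p$ the $d - 1$ residues $\{a(p-m)/p \bmod 1 : 1 \leq a \leq d-1\}$ are distinct and distribute over the grid $\{0, 1/p, \ldots, (p-1)/p\}$. The bound $p > 60|C| \geq 12 d$ should ensure that for generic $X$ some pair $(a, y_1)$ works; the small number of borderline configurations where $\lambda(X)$ is extremely close to $1/3$ will require an auxiliary argument---perhaps combining several covers, or invoking a pigeonhole collision among the column-values $c_i := s(\underbrace{1,\ldots,1}_{i},\underbrace{0,\ldots,0}_{p-i}) \in C$---to finish the induction.
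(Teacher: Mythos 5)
Your proposal follows the paper's general framework (induction on the step size, covers of three almost rectangles, Lemma~\ref{lem:nae}), and your construction is essentially the paper's argument for the regime where $\lambda(X)$ is far from $1/3$ (the paper runs it when $\lambda(X)\geq 5/12$, splitting $p-k$ and $p-l$ as evenly as possible so that $\lambda(Y)$ and $\lambda(Z)$ differ by at most $1/p$ and both fall below $1/3$). But the case you defer to an unspecified ``auxiliary argument'' --- $\lambda(X)$ close to $1/3$ --- is not a peripheral nuisance; it is the heart of the proposition, and your strategy provably cannot close it. To place both $\lambda(Y)$ and $\lambda(Z)$ strictly on the other side of $1/3$ you need $\lambda(Y)$ in the open interval $(1/3,\,2/3-\lambda(X))$, of length $1/3-\lambda(X)$. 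All achievable areas are integer multiples of $1/p^2$, and when $\lambda(X)=q/p^2$ (the largest admissible area below $1/3$) the interval $(1/3,\,2/3-q/p^2)$ contains no such multiple at all, since the smallest candidate $(q+1)/p^2=1/3+2/(3p^2)$ already exceeds $2/3-q/p^2=1/3+1/(3p^2)$. No choice of your parameters $(a,y_1)$, and indeed no cover whatsoever, can make both partners land above $1/3$. So the induction on step size alone does not go through.

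The paper's missing idea is a second, subordinate induction parameter: among counterexamples of minimal step size, take one with $\abs{\lambda(X)-1/3}$ \emph{maximal}. For $\lambda(X)<5/12$ it builds a cover $X,Y,Z$ in which $Y$ is merely a cyclic row-shift of $X$ (so $t(Y)=t(X)$ and $\lambda(Y)=\lambda(X)$, whence $Z\not\sim X$ by Lemma~\ref{lem:nae}), and $Z$ is row-shift-equivalent to an almost rectangle of the \emph{same} step size with $\lambda(Z)-1/3=-2(\lambda(X)-1/3)$. Thus $Z$ is twice as far from $1/3$, on the opposite side, hence tame by the maximality assumption, and $Z\not\sim X$ then forces $X$ to be tame. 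This sidesteps the integrality obstruction entirely, because one partner in the cover is a copy of $X$ itself rather than a matrix that must be pushed across the $1/3$ threshold. Without this (or some equally substantive replacement), your argument is incomplete precisely where the difficulty lies.
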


Let 
$$
X=\rrr{\underbrace{k, k, \cdots, k}_{m \times}, l,l, \dots, l}
$$
be a minimal counterexample in the following sense.
\begin{itemize}
\item $X$ has the minimum size of the step and,
\item among such counterexamples, $\abs{\lambda(X)-1/3}$ is maximal.
\end{itemize}

\begin{lemma}
The size of the step of $X$ is at least 2.
\end{lemma}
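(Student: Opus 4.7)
The plan is to derive a contradiction by showing that if $X$ has step size $0$ or $1$, then flattening $X$ row by row into a tuple of length $p^2$ recovers one of the tuples $\ttt{i}$ from Subsection~\ref{subsec:lines}. Since Lemma~\ref{lem:lines_tame} says every such $\ttt{i}$ is tame, and since both $t(X)$ and $\lambda(X)$ depend only on the flattened tuple, $X$ itself would then be tame, contradicting the choice of $X$ as a counterexample.

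Concretely, write $X = \rrr{\underbrace{k,\ldots,k}_{m},\underbrace{l,\ldots,l}_{p-m}}$ and consider the two subcases. If $k=l$, every column of $X$ has $k$ ones on top and $p-k$ zeros below, so reading entries in the order $x_{11}, x_{12}, \ldots, x_{1p}, x_{21}, \ldots, x_{pp}$ used in the definition of $t$ yields $kp$ ones followed by $(p-k)p$ zeros, which is exactly $\ttt{kp}$. If $k=l+1$, then the first $l$ rows of $X$ are constantly $1$, the $(l+1)$-st row consists of $m$ ones followed by $p-m$ zeros (because the "tall" columns are grouped at the left), and the remaining $p-l-1$ rows are constantly $0$; reading row by row therefore gives $lp+m$ ones followed by $(p-l)p-m$ zeros, i.e.\ $\ttt{lp+m}$.

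In either subcase the tuple-encoding of $X$ equals some $\ttt{i}$, so $t(X)=t(\ttt{i})$ and $\lambda(X)=i/p^2=\lambda(\ttt{i})$. Lemma~\ref{lem:lines_tame} then says $\ttt{i}$ is tame, and hence $X$ is tame, contradicting the fact that $X$ is a counterexample. I do not anticipate any substantive obstacle; the only bookkeeping point is that the "initial segment" shape of the flattened tuple relies on all tall columns being grouped at the left of $X$, which is built into the definition of $\rrr{\cdot}$, and on the fact that $p$ is prime so that $\lambda(X) \neq 1/3$ in the $k=l$ subcase.
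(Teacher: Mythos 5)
Your proposal is correct and is exactly the argument the paper intends: its one-line proof ("an almost rectangle with step of size $0$ or $1$ represents the same choice of arguments as $\ttt{i}$ for some $i$") is precisely your row-major flattening, which you have simply written out in full, including the correct identification of the index as $kp$ when $k=l$ and $lp+m$ when $k=l+1$. No gap.
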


\begin{proof}
This lemma is just a different formulation of Lemma~\ref{lem:lines_tame} since an almost rectangle with step of size 0 or 1 represents the same choice of arguments as $\ttt{i}$ for some $i$. 
\end{proof}

We handle two cases $\lambda(X) \geq 5/12$ and $\lambda(X) \leq 5/12$ separately, but the basic idea for both of them is the same as in the proof of Lemma~\ref{lem:lines_tame}. To avoid puzzles, let us remark that any number strictly between $1/3$ and $1/2$ (instead of $5/12$) would work with a sufficiently large $p$. 

\begin{lemma} \label{lem:c}
The area of $X$ is less than $5/12$.
\end{lemma}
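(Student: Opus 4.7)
The plan is to mimic the proof of Lemma~\ref{lem:lines_tame}: assuming for contradiction that $\lambda(X)\geq 5/12$, I will exhibit three $p\times p$ zero-one matrices forming a cover, all $g$-equivalent to $0_{p\times p}$, contradicting Lemma~\ref{lem:nae}. The setup is easy: since $\lambda(X)>1/3$ and $X$ is a counterexample to tameness, $X\not\sim 1_{p\times p}$, and because $\sim$ has exactly two blocks, $X\sim 0_{p\times p}$.

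I would let $X_1=X$ and take $X_2$ to be a cyclic shift of $X$ (as a $p^2$-tuple) chosen so that the $1$-positions of $X_1$ and $X_2$ are disjoint. Cyclicity of $t$ gives $t(X_2)=t(X_1)$, hence $X_2\sim X_1\sim 0_{p\times p}$. Setting $X_3=1_{p\times p}-X_1-X_2$, the triple $(X_1,X_2,X_3)$ is a cover and $\lambda(X_3)=1-2\lambda(X)<1/3$. The plan is to arrange the shift so that $X_3$ is itself tame; once that is done, $\lambda(X_3)<1/3$ forces $X_3\sim 0_{p\times p}$ and the proof is complete. The target is to realize $X_3$ as either a shifted line segment $\ttt{i}$ with $i\leq q$ (tame by Lemma~\ref{lem:lines_tame}) or a shifted almost rectangle falling under the minimal-counterexample hypothesis on $X$. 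The second option is flexible: any almost rectangle $X_3$ of step at most $d$ automatically satisfies the tie-breaker $\abs{\lambda(X_3)-1/3}>\abs{\lambda(X)-1/3}$, because $\lambda(X_3)<1/3<\lambda(X)$ combined with $\lambda(X)+\lambda(X_3)=1-\lambda(X)<2/3$ (the last inequality uses $\lambda(X)>1/3$) gives precisely this strict inequality.

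The principal obstacle is the geometric realization of such a shift. A pure row shift (shift by a multiple of $p$ in the tuple) tends to leave $X_3$ with two disconnected row-intervals of $1$'s in the short columns of $X$, so $X_3$ is then neither a line segment nor an almost rectangle; a genuinely mixed tuple shift $s=\mu p+\nu$ with $\nu\neq 0$, which also permutes columns, is needed to realign the complement of $X_1\cup X_2$ into one of the desired shapes. The largeness $p>60\abs{C}$ provides the slack to pick $\mu$ and $\nu$, while the regime $\lambda(X)<1/2$ ensures that two disjoint shifts of $X$ exist at all (the range $\lambda(X)\geq 1/2$ is handled by a symmetric argument exchanging the roles of $0$ and $1$ via the complement). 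Once $X_3\sim 0_{p\times p}$ is established, all three matrices of the cover lie in the $0_{p\times p}$-block, contradicting Lemma~\ref{lem:nae} and hence forcing $\lambda(X)<5/12$.
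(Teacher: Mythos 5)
There is a genuine gap, and it sits exactly where you locate your ``principal obstacle.'' Your plan --- take two disjoint shifted copies $X_1,X_2$ of $X$ and arrange the complement $X_3=1_{p\times p}-X_1-X_2$ to be a tame line segment or almost rectangle --- is in fact the argument the paper uses in the \emph{complementary} regime $\lambda(X)<5/12$, and it cannot be carried out in the regime this lemma addresses. The only operations that preserve $t$ are independent cyclic shifts of the columns (inner cyclicity) and a cyclic permutation of the columns (outer cyclicity); a whole-tuple shift by $\mu p+\nu$ is a special case. Both target shapes for $X_3$ (a line segment $\ttt{i}$ or an almost rectangle $\rrr{k_1,\dots,k_p}$) have the property that the ones in each column form a single cyclic interval, so your construction forces, for every column $j$, the interval of $X_1$ and the interval of $X_2$ in that column to be adjacent, hence $k_j+k_{\pi(j)}\le p$ for the relevant cyclic permutation $\pi$. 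When $\lambda(X)\ge 5/12$ nothing prevents $k>p/2$ together with $m>p/2$ (e.g.\ $m\approx 0.6p$, $k\approx 0.6p$, $l\approx 0.2p$ gives $\lambda\approx 0.44$), and then no cyclic $\pi$ can map every long column to a short one; two disjoint copies of $X$ with single-interval complement columns simply do not exist. Your fallback of ``exchanging $0$ and $1$'' only kicks in at $\lambda(X)\ge 1/2$ and in any case does not interact correctly with the minimality order (for $\lambda(X)>1/2$ the complement is \emph{closer} to area $1/3$, so minimality does not make it tame). Since you acknowledge rather than resolve the realization of the shift, the proposal is a plan with its key step missing.

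The paper's proof avoids duplicating $X$ altogether: it keeps one copy of $X$ and splits the complement of that single copy into two matrices $Y_1',Y_2'$, each a shifted almost rectangle whose column heights $k_i,l_i$ satisfy $l_1+l_2+k=p=k_1+k_2+l$ with $k_i-l_i<k-l$. These have strictly smaller step size, so they are tame by the \emph{primary} minimality criterion (no area comparison needed), and the hypothesis $\lambda(X)\ge 5/12$ is used only to show $\lambda(Y_1),\lambda(Y_2)<1/3$, whence $Y_1\sim Y_2\sim 0_{p\times p}$, the cover forces $X\sim 1_{p\times p}$, and $X$ is tame after all --- contradiction. The essential idea you are missing is this halving of the complement of a single copy of $X$, which is what makes the large-area case tractable.
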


\begin{proof}
Assume that $\lambda(X) \geq 5/12$. Let $k_1$, $k_2$, $l_1$, and $l_2$ be the non-negative integers such that 
\begin{eqnarray}
l_1+l_2+k = p = k_1+k_2+l, \label{eq:a} \\
1 \geq k_1 - k_2 \geq 0, \mbox{ and }
1 \geq l_1-l_2 \geq 0 \enspace. \label{eq:b}
\end{eqnarray} 
We have $k_1 \geq l_1$ and $k_2 \geq l_2$. Moreover,
since $k-l \geq 2$ by the previous lemma, it follows that both $k_1-l_1$ and $k_2-l_2$ are strictly smaller than $k-l$.

Consider the matrices
$$
Y_i = \rrr{\underbrace{l_i,l_i, \dots, l_i}_{m \times}, k_i, k_i, \dots, k_i}, \quad i=1,2\enspace.
$$
By shifting all the rows of $Y_i$, $i \in \{1,2\}$, $m$ times to the left we obtain an almost rectangle with a smaller step size than $X$, which is thus tame by the minimality assumption on $X$. 
Since such a shift changes neither the value of $t$ (as the outer ``$s$'' in the definition of $t$ is cyclic) nor the area, both $Y_1$ and $Y_2$ are tame matrices.

Let $Y_1'$ ($Y_2'$, resp.) be the matrices obtained from $Y_1$ ($Y_2$, resp.) by shifting the first $m$ columns $k$ times ($k+l_1$ times, resp.) down and the remaining columns $l$ times ($l+k_1$ times, resp.) down. Since $X,Y_1',Y_2'$ is a cover (by~(\ref{eq:a})) and
cyclically shifting columns does not change the value of $t$ (as the inner occurrences of ``$s$'' in the definition of $t$ are cyclic), Lemma~\ref{lem:nae} implies that $X$, $Y_1$, $Y_2$ are not all $g$-equivalent. 

From $X, Y_1', Y_2'$ being a cover, it also follows that
$$
\lambda(X) + \lambda(Y_1') + \lambda(Y_2') = \lambda(X) + \lambda(Y_1) + \lambda(Y_2) = 1\enspace.
$$
Moreover, by~(\ref{eq:b}), we have $\lambda(Y_2) \leq \lambda(Y_1)$ and these areas differ by at most $p/p^2=1/p$. Therefore
$$
\lambda(Y_1) = 1 - \lambda(X) - \lambda(Y_2) \leq 1 - 5/12 - \lambda(Y_1) + 1/p
$$
and, since $p > 12$ by the choice of $p$, we obtain
$$
\lambda(Y_2) \leq \lambda(Y_1) < 1/3\enspace.
$$

The tameness of $Y_i$ now gives us $Y_1 \sim Y_2 \sim 0_{p \times p}$ and then, since $Y_1,Y_2,X$ are not all $g$-equivalent and $0_{p \times p} \not\sim 1_{p \times p}$ (by the second part of Lemma~\ref{lem:lines_tame}), 
we get $X \sim 1_{p \times p}$. But $\lambda(X) \geq 5/12 > 1/3$, hence $X$ is tame, a contradiction with the choice of $X$. 
\end{proof}

It remains to handle the case $\lambda(X) < 5/12$. 

We first claim that $2k$ (and thus $k+l$ and $2l$) is less than $p$. Indeed, since the step size of $X$ is at most $5|C|$ (by the definition of an almost rectangle) and $p > 60|C|$, we get
\begin{align*}
5/12 > \lambda(X) &\geq \frac{p(k - 5|C|)}{p^2} \\
k & \leq 5p/12 + 5|C| < 5p/12 + p/12 = p/2\enspace.
\end{align*}

We now again need to distinguish two cases.
Assume first that $m < p/2$. 

Let 
\begin{align*}
Y &= \rrr{\underbrace{l, \cdots,l}_{m \times },\underbrace{k, \cdots, k}_{m \times }, l, \cdots, l},  \\
Z &= \rrr{\underbrace{p-k-l, \cdots, p-k-l}_{2m \times}, p-2l, \cdots, p-2l} \enspace.
\end{align*}
The definition of $Z$ makes sense since $p-k-l, p-2l \geq 0$ by the inequality $2k < p$ derived above. 

The triple $X,Y,Z$ (similarly to $X,Y_1,Y_2$ in the proof of Lemma~\ref{lem:c}) is such that we can obtain a cover by shifting the columns down. Therefore $X$, $Y$, $Z$ are not all $g$-equivalent and $\lambda(X)+\lambda(Y) + \lambda(Z)=1$.

On the other hand, by shifting all the rows of $Y$ $m$ times to the left we obtain $X$.
We get $\lambda(X) = \lambda(Y)$ and $t(X)=t(Y)$, therefore  $Z \not\sim X$ by the previous paragraph. 

Moreover,
 by shifting all the rows of $Z$ $2m$ times to the left we obtain an almost rectangle $Z'$ with $t(Z)=t(Z')$ and $\lambda(Z)=\lambda(Z')$. The step size of $Z'$ is $(p-2l) - (p-k-l) = k-l$, which is the same as the step size of $X$. However, the distance of its area from $1/3$ is strictly greater as shown by the following calculation. 
\begin{align*}
\frac{\abs{\lambda(Z)-1/3}}{\abs{\lambda(X)-1/3}} &=
\frac{\abs{(1 - 2\lambda(X)) - 1/3}}{\abs{\lambda(X)-1/3}} \\ &=
\frac{\abs{2(1/3 - \lambda(X))}}{\abs{\lambda(X)-1/3}} = 2 > 1\enspace.
\end{align*}
By the minimality of $X$, the almost rectangle $Z'$ is tame and so is $Z$. It is also apparent from the calculation that the signs of $\lambda(X)-1/3$ and $\lambda(Z)-1/3$ are opposite. Combining these two facts with $Z \not\sim X$ derived above, we obtain that $X$ is tame, a contradiction.

In the other case, when $m > p/2$, the proof is similar using the tuples
\begin{align*}
Y &= (l, \cdots, l, \underbrace{k, \cdots, k}_{m \times}), \\
Z &= (\underbrace{p-k-l, \cdots, p-k-l}_{(p-m) \times}, p-2k, \cdots, p-2k, \\
  &\quad \underbrace{p-k-l, \cdots, p-k-l}_{(p-m) \times}) \enspace.
\end{align*}
The proof of Proposition~\ref{prop:tame} is concluded.

\subsection{Contradiction} \label{subsec:contra}

Let 
$$
m = (p-1)/2
$$
and choose natural numbers $l_1$ and $l_2$ so that
$$
p/3 - 2|C| < l_1 < l_2 < p/3 
$$
and
$$
s(\underbrace{1, \cdots, 1}_{l_1 \times}, 0, \cdots, 0)
=
s(\underbrace{1, \cdots, 1}_{l_2 \times}, 0, \cdots, 0)\enspace.
$$
This is possible by the pigeonhole principle since there are $2|C| > C$ integers in the interval and $p/3 - 2|C| > 0$ by the choice of  $p$.

The sought after contradiction will be obtained by considering the two matrices
$$
X_i = \rrr{ \underbrace{k, \dots, k}_{m \times}, l_i, \dots, l_i}, \ i=1,2 \enspace,
$$ 
where $k$ will be specified soon.

Before choosing $k$, we observe that $t(X_1) = t(X_2)$. Indeed, the first $m$ columns of these matrices are the same (and thus so are their images under $s$) and the remaining columns have the same image under $s$ by the choice of $l_1$ and $l_2$. The claim thus follows from the definition of $t$. 

Next, note that for $k \leq p/3$ the area of both matrices is less than $1/3$ since $l_i < p/3$. On the other hand, for $k \geq p/3 + 3|C|$ the area is greater:
\begin{align*}
\lambda(X_i) 
&= \frac{mk+(p-m)l_i}{p^2} \\
&\geq \frac{\frac{p-1}{2} (p/3 + 3|C|) + \frac{p+1}{2} (p/3-2|C|)}{p^2} \\
&= \frac{p^2/3 + |C| (p-5)/2}{p^2} > 1/3\enspace.
\end{align*}   

Choose the maximum $k$ so that $\lambda(X_1) < 1/3$. The derived inequalities and the choice of $l_i$ implies
$$
l_1 < l_2 \leq k < p/3 + 3|C| \leq l_1 + 5|C| < l_2 + 5|C| \enspace,
$$
therefore both $X_1$ and $X_2$ are almost rectangles. By Proposition~\ref{prop:tame}, $X_1$ and $X_2$ are tame.

Since the area of $X_1$ is less than $1/3$, we get $X_1 \sim 0_{p \times p}$.   
We chose $k$ so that increasing $k$ by 1 makes the area of $X_1$ greater than $1/3$. 
From $m < p/2$ it follows that increasing $l_1$ by 1 makes the area even greater, hence $\lambda(X_2) > 1/3$ (recall that $l_2>l_1$)
and we obtain $X_2 \sim 1_{p \times p}$. 

Recall that $0_{p \times p} \not\sim 1_{p \times p}$ by the second part of Lemma~\ref{lem:lines_tame}.
Therefore $X_1 \not\sim X_2$, contradicting $t(X_1) = t(X_2)$.

\section{Conclusion} \label{sec:conclusion}

This paper shows that if $\OneInThree \to \relstr{C} \to \NAE$ and $\relstr{C}$ is finite, then $\CSP(\relstr{C})$ is NP-complete. The proof strategy is based on Theorem~\ref{thm:cyclic} and a simple fact that, given $\relstr{A} \to \relstr{C} \to \relstr{B}$,  each polymorphism of $\relstr{C}$ induces a polymorphism of $(\relstr{A},\relstr{B})$ (by composition with the homomorphism $\relstr{A} \to \relstr{C}$ from the inside and with $\relstr{C} \to \relstr{B}$ from the outside). 

There is an algebraic sufficient condition for NP-hardness for all $\omega$-categorical structures~\cite{BarP16} -- $\CSP(\relstr{C})$ is NP-hard whenever $\relstr{C}$ does not have a \emph{pseudo-Siggers} polymorphism, that is, a 6-ary polymorphism $s$ such that
$$
\alpha s(x,y,x,z,y,z) = \beta s(y,x,z,x,z,y) \mbox{ for all } x,y,z \in C \enspace,
$$
where $\alpha$ and $\beta$ are unary polymorphisms of $\relstr{C}$. Is it possible to apply pseudo-Siggers operations to strengthen the main theorem?

\begin{question}
Let $\relstr{C}$ be an $\omega$-categorical structure that pp-constructs $(\OneInThree, \NAE)$.
Is $\CSP(\relstr{C})$ necessarily NP-hard?
\end{question}  

The proof of Theorem~\ref{thm:main_fake} could be simplified if we had stronger or more suitable polymorphisms than cyclic operations. Alternative versions of Theorem~\ref{thm:cyclic} could also help in simplifying the proof of the CSP dichotomy conjecture. 
In particular, the following question seems open.
\begin{question}
Let $\relstr{C}$ be a finite relational structure with a cyclic polymorphism of arity at least 2. 
Does $\relstr{C}$ necessarily have a polymorphism $s$ of arity $n > 1$ such that, for any $a,b \in C$ and $(x_1, \dots, x_n) \in \{a,b\}^n$, the value $s(x_1, \dots, x_n)$ depends only on the number of occurrences of $a$ in $(x_1, \dots, x_n)$?
\end{question}
Note that a more optimistic version involving evaluations with $|\{x_1, \dots, x_n\}|=3$ is disproved by considering the polymorphisms of the disjoint union of a directed 2-cycle and a directed 3-cycle.

Let us finish with an optimistic outlook. While the main result of this paper is negative, its message is rather positive. It suggests that algebraic and analytical methods in the finite-domain CSP and PCSP should be combined with the model theoretic methods used for the infinite domains, and such a combination promises a significant synergy gain.

\bibliographystyle{plain}
\bibliography{CSPbib}

\end{document}